\documentclass[11pt,a4paper]{article}
\usepackage[utf8]{inputenc}
\usepackage[margin=2.6cm]{geometry}
\usepackage{amsmath,amssymb,amsthm}
\usepackage{graphicx}
\usepackage{enumitem}
\usepackage{cite}
\usepackage[hidelinks]{hyperref}
\hypersetup{pdftitle={A theory of plasticity: capacity for change as inverse configurational constraint},pdfauthor={Igor Branchi},pdfsubject={Plasticity, configurational constraint, barrier spectrum, network theory}}

\makeatletter
\def\@fnsymbol#1{\ensuremath{\ifcase#1\or \ast\or \dagger\or \ddagger\fi}}
\makeatother

\newtheorem{theorem}{Theorem}
\newtheorem{lemma}{Lemma}
\newtheorem{corollary}{Corollary}
\theoremstyle{definition}
\newtheorem{remark}{Remark}

\title{\textbf{A theory of plasticity:}\\[0.3em] \mdseries capacity for change as inverse configurational constraint}
\author{Igor Branchi\thanks{Correspondence: \href{mailto:igor.branchi@iss.it}{igor.branchi@iss.it}. ORCID: 0000-0003-4484-3598.}\\[0.4em]
{\normalsize Center for Behavioral Sciences and Mental Health,}\\
{\normalsize Istituto Superiore di Sanit\`a, Viale Regina Elena 299, 00161 Rome, Italy}}
\date{October 2026}

\begin{document}
\maketitle
\begin{center}
{\small First version: September 2026}
\end{center}

\begin{abstract}
Plasticity is invoked across the sciences to explain how systems can change, yet it is inferred from the very change it is meant to explain. A system may have many alternatives, realize none and still be plastic. Another may be driven far toward its only alternative, but the magnitude of that change does not establish its plasticity. What matters for plasticity is not how far the system moves but how strongly its present configuration constrains alternatives.

Here I propose that plasticity, understood as a prospective property, is inverse configurational constraint: a system is plastic to the extent that its present configuration weakly constrains displacement toward alternatives within a declared representation. For systems sharing a functional partition, structural representation, and normalized architecture, increasing plasticity lowers every positive finite structural barrier separating the current state from its alternatives. With a fixed accessibility criterion, the accessible repertoire can expand but cannot shrink.

In network representations, aggregate coupling provides a coefficient-level realization of configurational constraint, whose inverse defines plasticity. Barrier reduction and repertoire nesting are consequences, not components, of the definition. This connects two research traditions: quantifying connectivity without interpreting it as plasticity, and treating plasticity as a capacity for change without a prospective operational measure.
\end{abstract}

\section{Introduction}\label{introduction}

Plasticity is among the most widely invoked and inconsistently defined concepts across the sciences \cite{r1}. In the life sciences, it is inferred from changes in gene expression, cell state, synaptic strength, long-term potentiation and depression, critical-period reopening, functional connectivity, and behavioral adaptation \cite{r2,r3,r4,r5}. These measures differ in scale and mechanism. More importantly, they quantify manifestations of change without specifying the common theoretical property that makes them measures of plasticity.

This conceptual heterogeneity extends beyond the life sciences. In materials science, plastic deformation refers to irreversible structural change under load \cite{r6}. In ecology, related concepts such as resilience and adaptive capacity concern the range of states or transformations compatible with continued functioning \cite{r7,r8}. In organizational theory, flexibility and dynamic capabilities concern the capacity to reconfigure resources and routines \cite{r9}. These usages do not converge on a single definition, but they reveal a recurrent ambiguity between capacity for change and realized change.

The difficulty is not merely terminological. Most operational definitions infer plasticity from a difference between measurements, a rate of modification, or a response following an intervention. Such definitions require the system to change before its plasticity can be estimated \cite{r10,r11}.

Yet plasticity is a prospective property. A system with many available alternatives may realize none of them and still be plastic. Conversely, a system may be driven a long way toward its only available alternative, but the magnitude of that change does not establish its plasticity. Change is therefore neither necessary nor sufficient. What matters is how strongly the present configuration constrains the alternatives available to the system, not the magnitude of the trajectory subsequently realized \cite{r12}.

This distinction suggests that plasticity should be defined from the present organization of the system rather than from its subsequent trajectory. The proposal developed here is:
\[\boxed{\text{Plasticity is inverse configurational constraint}}\]
A system is plastic to the extent that its present configuration weakly constrains displacement toward alternatives. It is rigid to the extent that its present configuration strongly constrains such displacement.

The definition is structural and representation-relative. It concerns how strongly the current configuration constrains access to alternatives. This constraint can influence transition rates but does not by itself determine those rates, the alternative realized, or the ensuing trajectory.

Because any scalar measure of constraint depends on how the relevant structural contributions are represented, comparisons require a structural representation and normalization convention declared in advance for the comparison class. Within that representation, inverse constraint provides a prospective quantity whose configurational consequences can be derived.

The identification of inverse configurational constraint with plasticity is a conceptual definition. Its relation to configurational accessibility is a formal consequence. Under the matching conditions specified below, including a common structural representation and normalized architecture, increasing plasticity lowers every positive finite structural barrier by the same factor. Accessibility is therefore derived rather than built into the definition. In a network representation, the constraint magnitude \(C\) (Section~4.2) corresponds to aggregate coupling, or connectivity, up to the declared normalization \cite{r13,r14}. As discussed in Section~9.2, this coupling can exceed the nominal coupling of direct interactions when unrepresented common causes contribute, and only part of it may be jointly realizable when couplings are incompatible.

The present framework thereby connects two largely separate traditions: one that quantifies connectivity without interpreting it as plasticity, and another that invokes plasticity as a capacity for change without specifying a prospective operational measure. Aggregate coupling represents the constraint imposed by the current configuration while its inverse defines plasticity. Positive homogeneity determines how plasticity orders the local barrier spectrum. A structural accessibility criterion then translates this barrier ordering into an ordering of accessible repertoires. The framework therefore makes it possible to estimate how plastic a system is before it changes and to derive how its accessible repertoire would change if its present constraint were altered.

\section{Plasticity as inverse configurational constraint}\label{plasticity-as-inverse-configurational-constraint}

\subsection{Conceptual definition}\label{conceptual-definition}

\noindent\textbf{Definition.} Plasticity is the inverse of the magnitude of the configurational constraint imposed by a system's present state on access to functionally distinct alternatives, relative to a declared structural representation \cite{r12}.

This definition recovers an ordinary use of plasticity that definitions based on realized change cannot capture: a system can be plastic even when it is not currently changing. A nervous system during a critical period, a cell before differentiation, and a patient early in treatment can all be described as plastic even in the absence of ongoing change \cite{r3,r15,r16}. Conversely, a system may be driven far toward its only available alternative, but the magnitude of that change does not establish its plasticity.

The definition has four properties. It is \emph{structural}, because it concerns the present organization of the system. It is \emph{prospective}, because it can be estimated before the subsequent trajectory is known. It is \emph{counterfactual}, because it concerns what the system could do from its present configuration rather than what it eventually does. It is \emph{valence-neutral}, because it does not specify whether an ensuing change is beneficial, detrimental, adaptive, or maladaptive.

\subsection{Structure before dynamics}\label{structure-before-dynamics}

The object of the definition is the system as represented at the chosen level of analysis. Its structural representation may include molecular, cellular, neural, physiological, behavioral, or relational contributions.

Locality refers to the basin currently occupied by the system in configurational space, not to an individual component or network node. The theory characterizes the structural constraint on departures from that basin. The configurational space may be continuous or discrete.

Throughout, a configuration \(x\) is a point of configurational space, not a time-dependent trajectory \(x(t)\). Admissible paths used below are geometric routes through configurational space, not equations of motion. Temporal evolution introduces additional quantities, including transition rates, friction, stochastic forcing, kinetic prefactors, and observation horizons \cite{r17,r18}. None enters the definition of plasticity developed here.

\section{Configurational alternatives}\label{configurational-alternatives}

Consider a system occupying a local configurational basin \(a\). Let
\[\mathcal{X}\]
denote its configurational space, and let
\[\varphi:\mathcal{X} \rightarrow \mathcal{Y}\]
map physical configurations to functionally relevant outcomes.

Two configurations are treated as realizations of the same functional alternative when
\[x \sim x^{'}\  \Leftrightarrow \ \varphi(x) = \varphi\left( x^{'} \right)\]
Functional alternatives are therefore equivalence classes induced by \(\varphi\), not arbitrary microstates. Two configurations that realize the same functional outcome count as one alternative.

Let
\[\mathcal{Y}_{a} \subseteq \mathcal{Y}\]
denote the set of functionally distinct local alternatives relative to the present basin. For each \(\mu \in \mathcal{Y}_{a}\), let \(\Delta_{a\mu}\) denote the minimum structural barrier from basin \(a\) to any configuration that realizes \(\mu\).

More precisely, let \(x_{a}^{\star}\) denote the minimum of the current basin and let \(\Gamma_{a\mu}\) be the set of admissible structural paths starting at \(x_{a}^{\star}\) and terminating in \(\varphi^{- 1}(\mu)\).

For a continuous configurational space, an admissible path is a continuous map \(\gamma:\lbrack 0,1\rbrack \rightarrow \mathcal{X}\) satisfying
\[\gamma(0) = x_{a}^{\star},\ \ \gamma(1) \in \varphi^{- 1}(\mu)\]
For a discrete configurational space, an admissible path is a finite sequence \(\gamma = \left( x_{0},\ldots,x_{L} \right)\) with \(x_{0} = x_{a}^{\star}\), \(x_{L} \in \varphi^{- 1}(\mu)\), and each consecutive pair \(x_{\ell},x_{\ell + 1}\), for \(\ell \in \{ 0,\ldots,L - 1\}\), differing by an elementary move declared by the model. In a network realization, for example, an elementary move may consist of changing a single network element.

Let \(I_{\gamma}\) denote the corresponding path index set, with \(I_{\gamma} = \lbrack 0,1\rbrack\) in the continuous case and \(I_{\gamma} = \{ 0,\ldots,L\}\) in the discrete case. Let \(V:\mathcal{X} \rightarrow \mathbb{R}\) be a static structural cost landscape. \(V\) assigns to each configuration its structural cost within the declared representation. It need not be a thermodynamic energy, even when it takes the same mathematical form as one, and it does not describe how the system moves through time. Its explicit representation is given in Section~4.1. Define
\[\Delta_{a\mu} = \inf_{\gamma \in \Gamma_{a\mu}}\ \text{sup}_{s \in I_{\gamma}}\left( V\left( \gamma(s) \right) - V\left( x_{a}^{\star} \right) \right)\]
Here, a barrier is a minimax difference in structural cost. It acquires a kinetic interpretation as an activation barrier only when an additional dynamical law links structural cost to transitions. If no admissible path connects the current minimum to alternative \(\mu\), we adopt the convention \(inf\varnothing = \infty\), so that \(\Delta_{a\mu} = \infty\).

Because every admissible path begins at \(x_{a}^{\star}\), \(\Delta_{a\mu} \geq 0\). If several paths or several physical configurations realize the same functional alternative, the infimum selects the least costly route to that alternative. For a smooth landscape satisfying standard mountain-pass conditions, when the infimum is attained, the minimax value corresponds to the relevant separating saddle \cite{r19,r20}.

This construction prevents degeneracy from being counted as a multiplicity of functionally distinct alternatives without declaring degeneracy irrelevant. Multiple physical realizations can still lower \(\Delta_{a\mu}\) by providing additional paths to the same outcome. Degeneracy can therefore affect accessibility even though it does not determine the number of functional alternatives \cite{r21,r22}.

The functional map \(\varphi\) is part of the problem specification. Barrier spectra and repertoire sizes can be compared directly only when the systems use the same functional partition.

\section{Structural representation and local constraint}\label{structural-representation-and-local-constraint}

\subsection{Declared and normalized structural representation}\label{declared-and-normalized-structural-representation}

Let
\[\mathcal{B} = \left\{ f_{\alpha} \right\}_{\alpha = 1}^{m}\]
denote a declared local structural representation containing \(m\) structural components. The index \(\alpha \in \left\{ 1,\ldots,m \right\}\) enumerates the fixed structural functions in \(\mathcal{B}\). Within this representation, write the local structural cost landscape as
\[V\left( x,\mathcal{B},\mathbf{J} \right) = \sum_{\alpha = 1}^{m}J_{\alpha}f_{\alpha}(x)\]
where
\[\mathbf{J} = \left\{ J_{\alpha} \right\}_{\alpha = 1}^{m}\]
contains the corresponding coefficients. The two ingredients are different kinds of objects: each \(f_{\alpha}(x)\) is a prescribed function of configuration, while \(J_{\alpha}\) is its numerical coefficient. The symbol \(\mathbf{J}\) denotes the complete vector of landscape-forming coefficients, not only pairwise network couplings. For example, for two Ising-type variables \(\sigma_{1},\sigma_{2} \in \{ {-}1,{+}1\}\), with \(\mathbf{\sigma} = (\sigma_{1},\sigma_{2})\), the interaction term \(f_{1}\left( \mathbf{\sigma} \right) = - \sigma_{1}\sigma_{2}\) may carry coefficient \(J_{1}\), while the local-field term \(f_{2}\left( \mathbf{\sigma} \right) = - \sigma_{1}\) may carry coefficient \(J_{2}\), which in the conventional notation of statistical physics would be written as a local field rather than a coupling; in that notation \(J_{1} = J_{12}\) and \(J_{2} = h_{1}\), and the minus signs are adopted so that a positive coupling favors spin alignment, while a positive local field favors the positive orientation of the corresponding spin. Both coefficients must be included when the whole landscape is rescaled. This example illustrates the distinction between a term and its coefficient, without restricting the general construction to spins, energy functions, or pairwise interactions.

At the current basin \(a\), the local structural state is represented by
\[S_{a} = \left( \mathcal{B},\mathbf{J} \right)\]
Both \(\mathcal{B}\) and \(\mathbf{J}\) are defined relative to the current basin. Their dependence on \(a\) is left implicit throughout.

When \(\mathcal{B}\) and \(\mathbf{J}\) are fixed, \(V(x,\mathcal{B},\mathbf{J})\) is abbreviated as \(V(x)\), as in Section~3. Differentiability is not required for the barrier-scaling result. When gradients and Hessians are used to describe the local geometry of a smooth landscape, \(V\) is assumed to be twice differentiable at the relevant configurations.

The representation \(\mathcal{B}\) and its normalization convention must be fixed for the comparison class before the magnitude of \(\mathbf{J}\) is used to define plasticity. This requirement removes a basic scaling ambiguity. For any \(c_{\alpha} > 0\),
\[f_{\alpha} \rightarrow c_{\alpha}f_{\alpha},\ J_{\alpha} \rightarrow \frac{J_{\alpha}}{c_{\alpha}}\]
leaves the product \(J_{\alpha}f_{\alpha}(x)\) unchanged. Without a fixed scale for each structural component, the same landscape could therefore be assigned different coefficient magnitudes.

Each component is normalized according to a declared convention,
\[\parallel f_{\alpha} \parallel_{\mathcal{N}} = 1\]
for every \(\alpha\), where the declared norm \(\parallel \cdot \parallel_{\mathcal{N}}\) is fixed for the model class.

Normalization removes arbitrary componentwise rescaling, but it does not make the decomposition unique. The same \(V\) may admit different normalized representations containing different structural components or different numbers of components. This non-uniqueness can arise for a single system. It is a general property of coefficient-based representations because the landscape alone does not select the family in which it must be decomposed.

Declaring \(\mathcal{B}\) therefore fixes the component family, its normalization, and, when coefficients are not uniquely identifiable, the estimation or identification rule used to obtain \(\mathbf{J}\). The ambiguity is resolved by a modeling convention for the comparison class, not by deriving a unique representation from \(V\).

Plasticity, quantified in Section~4.2 as the inverse of the constraint magnitude, is consequently not a representation-free functional of \(V\) alone. It is defined relative to the declared structural representation:
\[P = P_{\mathcal{B}}\left( \mathbf{J} \right)\]
Different model classes may use different representations. Values obtained under different representations are representation-specific and are not directly comparable. No general mapping between admissible representations is supplied or claimed here.

Once \(\mathcal{B}\) has been fixed for a comparison class, however, \(P_{\mathcal{B}}\left( \mathbf{J} \right)\) is determined by the coefficient vector of each system rather than chosen separately for each comparison. In symptom-network applications, for example, the same item set, network specification, estimator, and normalization convention can place the estimated coefficients on a common representational basis \cite{r14}.

This is the representation-relative sense in which \(P\) is a structural property of the system.

\subsection{Local constraint magnitude}\label{local-constraint-magnitude}

For a fixed \(\mathcal{B}\), define the mean magnitude of local configurational constraint as
\[C_{\mathcal{B}}\left( \mathbf{J} \right) = \frac{1}{m}\sum_{\alpha = 1}^{m}\left| J_{\alpha} \right|\]
For \(C_{\mathcal{B}}\left( \mathbf{J} \right) > 0\), define plasticity as its inverse:
\[P_{\mathcal{B}}\left( \mathbf{J} \right) = \frac{1}{C_{\mathcal{B}}\left( \mathbf{J} \right)} = \frac{m}{\sum_{\alpha = 1}^{m} \mid J_{\alpha} \mid}\]
The zero-constraint case is the limiting value \(P \rightarrow \infty\) and is excluded from the finite comparisons below.

When the representation is unambiguous, its explicit dependence will be omitted and the quantities will be written as \(C\) and \(P\). Higher \(P\) means weaker mean configurational constraint within the fixed representation. Lower \(P\) means stronger mean constraint.

The magnitude functional records how much total coefficient magnitude the representation carries. The unsigned magnitude does not record whether signed coefficients can be satisfied jointly. In network representations, this is captured by effective aggregate coupling, which replaces the unsigned magnitude in \(C\) when couplings conflict (Section~9.2).

The factor \(m\) expresses constraint as a mean magnitude per represented structural component. It does not quantify degeneracy and performs no substantive work in the scaling theorem. Because \(m\) is fixed by \(\mathcal{B}\), it remains constant within every comparison to which the exact theorem applies.

More generally, the argument requires only a positive scalar magnitude functional satisfying
\[C_{\mathcal{B}}\left( \lambda\mathbf{J} \right) = \lambda C_{\mathcal{B}}\left( \mathbf{J} \right),\ \ \lambda > 0\]
Positive homogeneity is sufficient for the scaling result, but does not by itself establish that a given functional is an appropriate measure of configurational constraint. Which functional is taken to measure constraint is a substantive choice within the declared representation, not a consequence of the scaling result. The normalized \(\ell^{1}\) coefficient magnitude is the simplest realization. Terms that are constant in the configuration are excluded from the declared representation, since adding a constant to the landscape changes no barrier and should not change the measured constraint. Excluding individually constant functions is not sufficient, because a combination of non-constant functions can itself be constant. The representation, its normalization, and the rule identifying the coefficients are therefore defined relative to a fixed convention that removes additive constants, including those generated by combinations of terms. If the normalized components are dimensionless, the coefficients, the constraint magnitude, the barriers, and the threshold share the units of the landscape, plasticity carries their inverse, and the normalized barriers are dimensionless; alternatively all these quantities may be expressed in common non-dimensionalized units.

\section{Constraint magnitude and normalized architecture}\label{constraint-magnitude-and-normalized-architecture}

Within a fixed representation, write
\[\mathbf{J} = C\widehat{\mathbf{J}}\]
where \(\widehat{\mathbf{J}}\) is normalized so that
\[\frac{1}{m}\sum_{\alpha = 1}^{m} \mid {\widehat{J}}_{\alpha} \mid = 1\]
The scalar \(C\) specifies the magnitude of local configurational constraint. The normalized coefficient vector \(\widehat{\mathbf{J}}\) specifies its architecture, meaning the distribution and signs of the coefficients within \(\mathcal{B}\).

Since
\[P = \frac{1}{C}\]
varying plasticity while holding \(\mathcal{B}\) and \(\widehat{\mathbf{J}}\) fixed changes the overall coefficient magnitude without changing the structural representation or normalized architecture. The family
\[\mathbf{J} = C\widehat{\mathbf{J}}\]
is a local structural ray because every member is a positive scalar multiple of the same normalized coefficient vector.

An affine transformation of estimated coupling coefficients has previously been used in energy-landscape analysis to locate empirical neural dynamics relative to a phase transition \cite{r23}. When the additive offset vanishes, the coupling coefficients undergo radial rescaling. However, that analysis holds field terms fixed, so the full landscape is not generally rescaled uniformly. The exact structural ray considered here instead rescales every landscape-forming coefficient, allowing the local barrier spectrum to be ordered analytically.

Three objects play distinct roles:
\[\varphi\]
specifies which configurations count as functionally distinct alternatives,
\[\mathcal{B}\]
specifies how the local structural landscape is represented, and
\[\widehat{\mathbf{J}}\]
specifies the normalized architecture of the coefficients within that representation.

For the exact barrier ordering derived below, two systems or two states must share
\[\varphi,\text{ }\mathcal{B},\text{ }\widehat{\mathbf{J}}\]
These conditions do not assume the result. They isolate variation in constraint magnitude from variation in functional resolution, representation, or architecture.

The fixed-architecture condition restricts the exact barrier ordering, not the definition of plasticity. An arbitrary change in \(\mathbf{J}\) can be decomposed into a radial change in \(C\) and an architectural change in \(\widehat{\mathbf{J}}\). The theorem determines the effect of the radial component. The architectural component may reinforce or oppose that effect and must be evaluated separately. Empirical comparisons approximate the exact radial case to the extent that their normalized architectures are similar. Architectural mismatch should therefore be quantified as a source of departure from the exact prediction.

\subsection{Biological constraints on architectural variation}\label{biological-constraints-on-architectural-variation}

Although the exact structural-ray result requires \(\widehat{\mathbf{J}}\) to remain fixed, biological systems may approximate this condition without sharing identical normalized architectures. Living systems do not sample the space of possible architectures uniformly. Physical, physiological, developmental, and functional constraints render many architectures unrealizable and may confine systems within a comparison class to a restricted region of that space.

\section{Positive homogeneity of the local landscape}\label{positive-homogeneity-of-the-local-landscape}

\begin{lemma}[Positive homogeneity of structural barriers]
Fix the functional map \(\varphi\), the configurational space \(\mathcal{X}\), the admissible-path rules, and a structural representation \(\mathcal{B} = \left\{ f_{\alpha} \right\}_{\alpha = 1}^{m}\). Suppose that every landscape-forming term belongs to the coefficient vector being rescaled:
\[V\left( x,\mathcal{B},\mathbf{J} \right) = \sum_{\alpha = 1}^{m}{J_{\alpha}f_{\alpha}(x)}\]
Assume that the occupied basin \(a\) and its selected minimum \(x_{a}^{\star}\) are identified by rules invariant under positive rescaling of the landscape. Candidate alternatives are defined from the fixed functional map and the occupied basin, and admissible-path families are determined by the fixed configurational space and the declared path rules. Then, for every \(\lambda > 0\):

\begin{enumerate}[label=\textup{(\roman*)},leftmargin=3em]
\item the landscape is positively homogeneous,
\[V\left( x,\mathcal{B},\lambda\mathbf{J} \right) = \lambda V\left( x,\mathcal{B},\mathbf{J} \right)\]
\item the occupied basin, its selected minimum, the candidate set \(\mathcal{Y}_{a}\), and each admissible-path family \(\Gamma_{a\mu}\) remain unchanged;

\item every structural barrier satisfies
\[\Delta_{a\mu}\left( \mathcal{B},\lambda\mathbf{J} \right) = \lambda\Delta_{a\mu}\left( \mathcal{B},\mathbf{J} \right)\]
\end{enumerate}
with the convention \(\lambda\infty = \infty\).
\end{lemma}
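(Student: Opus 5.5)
The proof goes in the order (i), (ii), (iii), each step using only the preceding ones and the stated hypotheses.

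For (i), I substitute \(\lambda\mathbf{J}\) into the declared decomposition and pull the scalar out of the finite sum:
\[V(x,\mathcal{B},\lambda\mathbf{J})=\sum_{\alpha=1}^{m}\lambda J_{\alpha}f_{\alpha}(x)=\lambda\sum_{\alpha=1}^{m}J_{\alpha}f_{\alpha}(x)=\lambda V(x,\mathcal{B},\mathbf{J}).\]
This holds pointwise in \(x\). It uses only the hypothesis that every landscape-forming term carries a coefficient in \(\mathbf{J}\), so there is no unscaled field or offset term (the situation contrasted with \cite{r23} above).

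For (ii), I argue by bookkeeping of what each object depends on.
\begin{itemize}
\item The basin \(a\) and the minimum \(x_{a}^{\star}\) are, by hypothesis, selected by rules invariant under positive rescaling. By (i) the rescaled landscape is a positive rescaling of the original, so these rules return the same \(a\) and \(x_{a}^{\star}\).
\item Since \(\lambda>0\), the map \(t\mapsto\lambda t\) is strictly increasing. Hence the sublevel sets, local minima and basin structure of \(\lambda V\) coincide with those of \(V\). This is worth noting as the reason such invariant rules are natural.
\item The candidate set \(\mathcal{Y}_{a}\) is built from \(\varphi\) and \(a\), neither of which changed, so it is unchanged.
\item Each \(\Gamma_{a\mu}\) is determined by \(\mathcal{X}\), the path rules, the fixed starting point \(x_{a}^{\star}\) and the fixed target set \(\varphi^{-1}(\mu)\). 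None of these involves \(\mathbf{J}\), so each \(\Gamma_{a\mu}\) is unchanged.
\end{itemize}

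For (iii), I fix \(\mu\in\mathcal{Y}_{a}\) and treat two cases.

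\emph{Empty path family.} If \(\Gamma_{a\mu}=\varnothing\), then by (ii) it stays empty after rescaling. Both barriers equal \(\infty\), and the convention \(\lambda\infty=\infty\) gives the identity.

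\emph{Nonempty path family.} For each \(\gamma\in\Gamma_{a\mu}\), part (i) and the unchanged start point give
\[V(\gamma(s),\mathcal{B},\lambda\mathbf{J})-V(x_{a}^{\star},\mathcal{B},\lambda\mathbf{J})=\lambda\bigl(V(\gamma(s))-V(x_{a}^{\star})\bigr).\]
I then use two elementary facts about a positive constant \(\lambda\) and a set \(A\subseteq\mathbb{R}\cup\{\infty\}\):
\[\sup\lambda A=\lambda\sup A,\qquad \inf\lambda A=\lambda\inf A.\]
Both hold because multiplication by \(\lambda\) is an order-preserving bijection of the extended reals, with \(\lambda\infty=\infty\). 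Applying the first to the supremum over \(I_{\gamma}\) gives the path cost \(\lambda\) times the original. Applying the second to the infimum over the unchanged family \(\Gamma_{a\mu}\) then yields \(\Delta_{a\mu}(\mathcal{B},\lambda\mathbf{J})=\lambda\Delta_{a\mu}(\mathcal{B},\mathbf{J})\).

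Some path suprema may be \(+\infty\), for example when \(V\) is unbounded along a continuous path without continuity assumptions. The extended-real convention covers this case, and no attainment of infima or differentiability is needed.

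The main obstacle is not the computation but (ii). The whole argument hinges on the path family and the base point being literally the same sets before and after rescaling. This must be justified from the invariance hypotheses rather than assumed. I would make the dependence explicit: \(\Gamma_{a\mu}\) is a function of \((\mathcal{X},\text{path rules},x_{a}^{\star},\varphi,\mu)\) only, and \(x_{a}^{\star}\) is fixed by the rescaling-invariant rule. Once that is in place, (iii) is a two-line order-preservation argument.
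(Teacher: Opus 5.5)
Your proof is correct and matches the paper's argument essentially step for step. Part (i) follows from linearity in the coefficients. Part (ii) follows from the order-preservation of positive scaling together with the assumed rescaling-invariance of basin selection and path rules. Part (iii) follows by commuting multiplication by \(\lambda>0\) with the supremum along each path and the infimum over the unchanged family \(\Gamma_{a\mu}\), with the empty family handled by the convention \(\lambda\infty=\infty\). Your explicit dependence bookkeeping and extended-real remarks are more detailed than the paper's proof but introduce no different idea.
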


\begin{proof}
Part (i) follows immediately from linearity in the coefficients. Positive multiplication preserves the ordering of all structural costs. Together with the stated invariance of the basin-selection and path rules, this establishes (ii). For every \(\mu \in \mathcal{Y}_{a}\),
\[\Delta_{a\mu}\left( \mathcal{B},\lambda\mathbf{J} \right) = \inf_{\gamma \in \Gamma_{a\mu}}\left\lbrack \sup_{s \in I_{\gamma}}\lambda V\left( \gamma(s),\mathcal{B},\mathbf{J} \right) - \lambda V\left( x_{a}^{\star},\mathcal{B},\mathbf{J} \right) \right\rbrack\]
Because \(\lambda > 0\), multiplication by \(\lambda\) commutes with both the supremum along each path and the infimum over admissible paths, so that \(\Delta_{a\mu}\left( \mathcal{B},\lambda\mathbf{J} \right) = \lambda\Delta_{a\mu}\left( \mathcal{B},\mathbf{J} \right)\). The equality also holds when \(\Gamma_{a\mu} = \varnothing\).
\end{proof}

The barrier-scaling result requires neither differentiability nor the existence of separating saddles. It applies to continuous and discrete configurational spaces, provided that the reference minimum and the admissible-path families remain invariant under positive rescaling. The argument depends on the minimax construction of the barriers, not on a prescribed dynamical law.

\begin{remark}[Local structure under radial rescaling]
Positive multiplication preserves the ordering of structural costs and hence all local minima defined through that ordering. For a differentiable landscape, \(\nabla(\lambda V) = \lambda\nabla V\) and, when the Hessian exists, \(H_{\lambda\mathbf{J}} = \lambda H_{\mathbf{J}}\), where \(H_{\mathbf{J}}\) denotes the Hessian of \(V(x,\mathcal{B},\mathbf{J})\) with respect to \(x\). Thus, critical points retain their locations and Hessian eigenvalue signs. In particular, nondegenerate minima and separating saddles retain their classification. In a discrete space, local minima and basin constructions based solely on cost ordering and fixed tie-breaking rules are likewise preserved.
\end{remark}

The assumption that all landscape-forming terms are rescaled is essential. If the landscape contains an additive term whose coefficient is held fixed, the critical points can move and exact proportionality need not hold. Such a comparison does not lie on the structural ray defined above.

\subsection{Coefficient variation and representation variation}\label{coefficient-variation-and-representation-variation}

A present-state contribution belongs to \(\mathbf{J}\) when it changes the magnitude of one or more coefficients multiplying fixed components in \(\mathcal{B}\). Its mechanistic origin does not matter. Intrinsic and relational contributions can both belong to \(\mathbf{J}\).

A feature that changes a shape function,
\[f_{\alpha}(x) \rightarrow f_{\alpha}^{'}(x)\]
changes \(\mathcal{B}\). For example, changing \(J\) in the product \(Jf(x)\) changes its coefficient. Replacing the fixed shape function \(f(x)\) with \(f(Jx)\) instead changes the function itself and generally defines a different structural representation or structural class.

The relevant distinction is therefore between variation within a fixed representation and variation of the representation itself. The exact theorem concerns the radial subset of the former, in which \(\widehat{\mathbf{J}}\) is also fixed.

\section{Plasticity and the local barrier spectrum}\label{plasticity-and-the-local-barrier-spectrum}

For a system on a fixed structural ray,
\[\mathbf{J} = C\widehat{\mathbf{J}}\]
Lemma 1 gives
\[\Delta_{a\mu}\left( \mathcal{B},\mathbf{J} \right) = C\Delta_{a\mu}\left( \mathcal{B},\widehat{\mathbf{J}} \right)\]
Define the normalized reference barrier (a derived property of the landscape, not a structural term or coefficient) by
\[{\widehat{\Delta}}_{a\mu} = \Delta_{a\mu}\left( \mathcal{B},\widehat{\mathbf{J}} \right)\]
For fixed \(\varphi\), \(\mathcal{B}\), and \(\widehat{\mathbf{J}}\), the quantities \({\widehat{\Delta}}_{a\mu}\) are constants indexed by the same functional alternatives. Therefore,
\[\Delta_{a\mu} = C{\widehat{\Delta}}_{a\mu}\]
and since \(C = 1/P\),
\[\Delta_{a\mu} = \frac{{\widehat{\Delta}}_{a\mu}}{P}\]
This relation introduces no criterion of accessibility and assumes no prior relation between plasticity and repertoire size.

\begin{theorem}[Plasticity orders the complete local barrier spectrum componentwise]
Consider two systems, or two states of one system, that share the same configurational space, occupy corresponding current basins, and satisfy
\[\varphi^{(1)} = \varphi^{(2)},\ \ \mathcal{B}^{(1)} = \mathcal{B}^{(2)},\ \ {\widehat{\mathbf{J}}}^{(1)} = {\widehat{\mathbf{J}}}^{(2)}\]
The admissible-path rules must also be the same in both systems, including the declared elementary moves in a discrete configurational space. If
\[P^{(2)} > P^{(1)}\]
then the barrier family is ordered componentwise: for every candidate alternative \(\mu\),
\[\Delta_{a\mu}^{(2)} = \frac{P^{(1)}}{P^{(2)}}\Delta_{a\mu}^{(1)} \leq \Delta_{a\mu}^{(1)}\]
\end{theorem}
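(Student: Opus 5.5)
The proof reduces both barrier families to a common reference spectrum and then compares the scalar factors. Since the two systems share $\mathcal{B}$ and $\widehat{\mathbf{J}}$, each coefficient vector lies on the same structural ray: $\mathbf{J}^{(i)} = C^{(i)}\widehat{\mathbf{J}}$ with $C^{(i)} = 1/P^{(i)} > 0$. Because $\widehat{\mathbf{J}}$ is normalized, $C_{\mathcal{B}}(\widehat{\mathbf{J}})=1$, and positive homogeneity of $C_{\mathcal{B}}$ confirms that $C^{(i)}$ is exactly the constraint magnitude of system $i$.

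\textbf{Step 1: apply Lemma 1 to each system.} Take $\mathbf{J}=\widehat{\mathbf{J}}$ and $\lambda = C^{(i)}$ in Lemma 1. The shared $\varphi$, configurational space, path rules (including the elementary moves in the discrete case) and corresponding basins are exactly the hypotheses that make part (ii) applicable to both systems against one reference. So both systems have the same candidate set $\mathcal{Y}_a$, the same selected minimum, and the same path families $\Gamma_{a\mu}$ as the reference landscape $V(\cdot,\mathcal{B},\widehat{\mathbf{J}})$. Part (iii) then gives
\[\Delta^{(i)}_{a\mu} = C^{(i)}\widehat{\Delta}_{a\mu} = \widehat{\Delta}_{a\mu}/P^{(i)}\]
for every $\mu$. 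This matches the relation derived just before the theorem, and the barriers of the two systems are indexed by the same alternatives, so a componentwise comparison makes sense.

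\textbf{Step 2: eliminate the reference barrier and order the components.} If $\widehat{\Delta}_{a\mu}<\infty$, dividing the two identities gives
\[\Delta^{(2)}_{a\mu} = \frac{P^{(1)}}{P^{(2)}}\Delta^{(1)}_{a\mu}.\]
Since $0<P^{(1)}/P^{(2)}<1$ and $\Delta^{(1)}_{a\mu}\ge 0$ (every admissible path starts at the minimum), the inequality $\Delta^{(2)}_{a\mu}\le\Delta^{(1)}_{a\mu}$ follows. Equality holds exactly when the barrier is zero; otherwise the inequality is strict, which is the ``every positive finite barrier is lowered by the same factor'' claim in the introduction. If $\widehat{\Delta}_{a\mu}=\infty$, both barriers are infinite because $\lambda\infty=\infty$. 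The stated equality then holds under the same convention, with $\infty\le\infty$.

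\textbf{Main obstacle.} The only delicate point is justifying that the two systems' basins, minima, candidate sets and path families really coincide. Without this, ``the same $\mu$'' and the reference barrier $\widehat{\Delta}_{a\mu}$ would not be well defined across the systems. I would handle it by noting that ``corresponding current basins'' is to be read through the rescaling-invariant basin-selection rule assumed in Lemma 1. Both landscapes are positive multiples of one reference landscape, so by the Remark they have identical cost orderings and hence identical local minima and basin constructions. The rest is arithmetic.
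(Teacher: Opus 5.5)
Your proof is correct and is essentially the paper's argument. The matching conditions put both systems on the ray $\mathbf{J}^{(i)} = \widehat{\mathbf{J}}/P^{(i)}$, Lemma 1 gives $\Delta^{(i)}_{a\mu} = \widehat{\Delta}_{a\mu}/P^{(i)}$ over a common candidate set, and $0<P^{(1)}/P^{(2)}<1$ then yields the equality and the componentwise ordering, with zero and infinite barriers unchanged. One small fix: in Step 2, obtain the equality by substituting $\widehat{\Delta}_{a\mu} = P^{(1)}\Delta^{(1)}_{a\mu}$ rather than by dividing the two identities, so that the case of zero barriers is also covered.
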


\begin{proof}
The matching conditions ensure that both systems share the same candidate alternatives and normalized reference barriers. Since \(\Delta_{a\mu}^{(i)} = {\widehat{\Delta}}_{a\mu}/P^{(i)}\), the displayed equality follows. Because \(0 < P^{(1)}/P^{(2)} < 1\), the inequality is strict for every finite positive barrier. Zero and infinite barriers remain unchanged.
\end{proof}

The matching conditions make the two landscapes positive multiples of one another, preserving their basin structure. The correspondence of the occupied basins then ensures that the two barrier families are indexed by the same functional alternatives and share the same constants \({\widehat{\Delta}}_{a\mu}\).

Here componentwise means alternative by alternative. Each positive finite entry of the barrier family \(\left( \Delta_{a\mu} \right)_{\mu \in \mathcal{Y}_{a}}\) strictly decreases, each zero entry remains zero, and each infinite entry remains infinite. It does not mean that every coefficient in \(\mathbf{J}\) is compared independently.

The theorem is threshold-free. Increasing plasticity cannot raise the structural barrier to any candidate alternative within the matched local structural family. All positive finite barriers are rescaled by the same inverse factor.

Radial rescaling therefore preserves the ordering of the complete barrier spectrum and the ratios between its finite positive entries. It changes their common scale without changing their relative structure.

Plasticity is defined from the coefficients alone. Barrier height is not part of the definition. It is the quantity that the definition is shown to order. This direction prevents circularity.

The scaling argument rests on linearity of the landscape in its coefficients, positive homogeneity of the constraint magnitude, and invariance of the reference basin and admissible-path families under radial rescaling. Within this matched structural class, it applies independently of the substantive interpretation of the configurations. Theorem 1 expresses the same result in the coordinate proposed here as plasticity. The conceptual contribution is the identification of inverse configurational constraint as that coordinate.

\section{From barrier spectrum to accessible repertoire}\label{from-barrier-spectrum-to-accessible-repertoire}

A structural landscape specifies configurational costs. It does not by itself determine which costs should count as sufficiently low to be called accessible.

Introduce a fixed structural accessibility criterion
\[\theta > 0\]
The threshold is a barrier difference measured from the minimum \(x_{a}^{\star}\) of the current basin, not a separate threshold above each destination minimum. Define
\[\mathcal{R}_{a}(P,\theta) = \left\{ \mu \in \mathcal{Y}_{a}:\Delta_{a\mu} \leq \theta \right\}\]
Theorem 1 requires matched \(\varphi\), \(\mathcal{B}\), and \(\widehat{\mathbf{J}}\). Comparison of discrete repertoires additionally requires the same \(\theta\) in common structural-cost units. Matching \(\varphi\) gives the same functional alternatives, matching \(\mathcal{B}\) gives the coefficients the same structural meaning, and matching \(\widehat{\mathbf{J}}\) holds the normalized architecture fixed.

Radial variation of the coefficient magnitude must be distinguished from a mere change of cost units. A change of units would also alter the numerical value assigned to \(\theta\), leaving the accessible repertoire unchanged. Here the threshold is held fixed in common structural-cost units while the landscape is rescaled.

Using \(\Delta_{a\mu} = {\widehat{\Delta}}_{a\mu}/P\), the accessible repertoire can equivalently be written as
\[\mathcal{R}_{a}(P,\theta) = \left\{ \mu \in \mathcal{Y}_{a}:{\widehat{\Delta}}_{a\mu} \leq P\theta \right\}\]
Thus,
\[\mu \in \mathcal{R}_{a}(P,\theta)\  \Leftrightarrow \ P \geq \frac{{\widehat{\Delta}}_{a\mu}}{\theta}\]
Define
\[P_{a\mu}^{\star} = \frac{{\widehat{\Delta}}_{a\mu}}{\theta}\]
For a finite positive normalized barrier, this is the plasticity value at which alternative \(\mu\) first satisfies the declared structural-accessibility criterion. An alternative with zero barrier is accessible for every \(P > 0\), whereas one with infinite barrier remains inaccessible for every finite \(P\). The critical value is derived from barrier scaling rather than postulated as a definition of plasticity.

\begin{corollary}[Accessible repertoires are nested]
Consider two systems satisfying the matching conditions of Theorem 1 and using a common threshold \(\theta\). If
\[P^{(2)} > P^{(1)}\]
then
\[\mathcal{R}_{a}\left( P^{(1)},\theta \right) \subseteq \mathcal{R}_{a}\left( P^{(2)},\theta \right)\]
Consequently,
\[\mid \mathcal{R}_{a}\left( P^{(2)},\theta \right) \mid \geq \mid \mathcal{R}_{a}\left( P^{(1)},\theta \right) \mid\]
\end{corollary}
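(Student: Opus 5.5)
The plan is a direct elementwise inclusion argument built on Theorem 1, followed by a cardinality comparison. First, I will fix an arbitrary \(\mu \in \mathcal{R}_{a}(P^{(1)},\theta)\), so that \(\mu \in \mathcal{Y}_{a}\) and \(\Delta_{a\mu}^{(1)} \leq \theta\). The matching conditions of Theorem 1 (same \(\varphi\), \(\mathcal{B}\), \(\widehat{\mathbf{J}}\), configurational space, corresponding basins and path rules) guarantee, via Lemma 1(ii), that both systems have the same candidate set \(\mathcal{Y}_{a}\). Hence \(\mu\) is also a candidate alternative for system (2), and the two repertoires are subsets of one common index set.

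The key step is then to invoke Theorem 1 for this \(\mu\):
\[
\Delta_{a\mu}^{(2)} = \frac{P^{(1)}}{P^{(2)}}\Delta_{a\mu}^{(1)} \leq \Delta_{a\mu}^{(1)} \leq \theta,
\]
so \(\mu \in \mathcal{R}_{a}(P^{(2)},\theta)\). This uses the common threshold \(\theta\) in shared structural-cost units, which the hypothesis supplies. Equivalently, I could argue through the threshold form: \(\mu \in \mathcal{R}_{a}(P^{(1)},\theta)\) means \(\widehat{\Delta}_{a\mu} \leq P^{(1)}\theta\). Since \(\theta>0\) and \(P^{(2)}>P^{(1)}\), this gives \(\widehat{\Delta}_{a\mu} \leq P^{(2)}\theta\). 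That inequality is exactly membership in the larger repertoire, using the shared \(\widehat{\Delta}_{a\mu}\).

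I will treat the edge cases briefly. A zero barrier stays zero, so the alternative remains accessible. An infinite barrier never belongs to \(\mathcal{R}_{a}(P^{(1)},\theta)\) because \(\theta<\infty\), so it needs no separate treatment.

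The cardinality statement follows from monotonicity of cardinality under inclusion. This holds whether the repertoires are finite or infinite, with \(|\cdot|\) read as cardinal number. The only point needing care is the identification of the index sets across the two systems, i.e., that "corresponding basins" together with Lemma 1(ii) makes \(\mathcal{Y}_{a}^{(1)} = \mathcal{Y}_{a}^{(2)}\). Without that, the inclusion would compare alternatives in different sets. I expect this bookkeeping to be the only real obstacle; the inequality itself is immediate from Theorem 1.
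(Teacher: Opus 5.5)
Your proposal is correct and matches the paper's proof: take $\mu \in \mathcal{R}_{a}(P^{(1)},\theta)$, apply Theorem 1 to get $\Delta_{a\mu}^{(2)} \leq \Delta_{a\mu}^{(1)} \leq \theta$, conclude $\mu \in \mathcal{R}_{a}(P^{(2)},\theta)$, and obtain the cardinality inequality from the inclusion. Your extra bookkeeping (a common candidate set $\mathcal{Y}_{a}$, and the zero and infinite barriers) is consistent with what the paper already settles inside the proof of Theorem 1.
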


\begin{proof}
If \(\mu \in \mathcal{R}_{a}\left( P^{(1)},\theta \right)\), then \(\Delta_{a\mu}^{(1)} \leq \theta\). By Theorem 1, \(\Delta_{a\mu}^{(2)} \leq \Delta_{a\mu}^{(1)} \leq \theta\), so \(\mu \in \mathcal{R}_{a}\left( P^{(2)},\theta \right)\).
\end{proof}

For a finite repertoire, the inequality in repertoire size is strict whenever at least one positive barrier crosses \(\theta\) between the two plasticity values. For fixed \(\varphi\), \(\mathcal{B}\), \(\widehat{\mathbf{J}}\), and \(\theta\), the map
\[P \mapsto \mid \mathcal{R}_{a}(P,\theta) \mid\]
is therefore a non-decreasing step function whenever the set of functionally distinct alternatives is finite.

\begin{figure}[p]
\centering
\includegraphics[width=\textwidth]{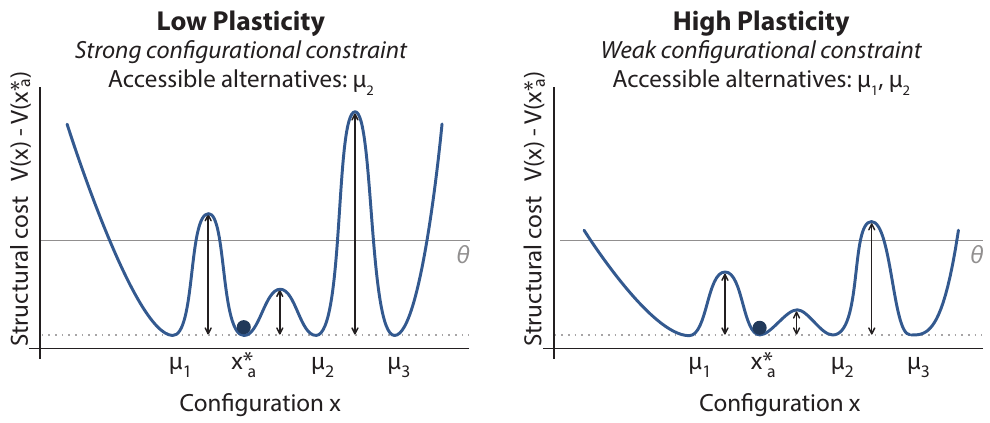}

\medskip
\begin{minipage}{\textwidth}\small\setlength{\parskip}{0pt}
\textbf{Figure 1. Plasticity uniformly rescales the local barrier spectrum and can expand the accessible repertoire.} Schematic local structural cost landscapes \(V(x)\) for two matched systems, or two states of one system, satisfying the matching conditions of Theorem 1. They share the functional partition \(\varphi\), the declared structural representation \(\mathcal{B}\), the normalized architecture \(\widehat{\mathbf{J}}\), and the family of admissible paths, and differ only in the magnitude of configurational constraint \(C = 1/P\). Their coefficient vectors therefore lie on the same local structural ray, \(\mathbf{J} = C\widehat{\mathbf{J}}\). The filled circle marks the current-basin minimum \(x_{a}^{\star}\), and \(\mu_{1}\), \(\mu_{2}\), and \(\mu_{3}\) denote functionally distinct alternatives, defined as equivalence classes induced by \(\varphi\). Vertical black arrows indicate the structural barriers \(\Delta_{a\mu}\), defined as the infimum, over admissible paths from the current basin to alternative \(\mu\), of the highest structural cost encountered along the path, measured relative to \(V\left( x_{a}^{\star} \right)\). The horizontal gray line represents the fixed structural accessibility criterion \(\theta\). An alternative belongs to the accessible repertoire \(\mathcal{R}_{a}\left( P,\theta \right)\) when \(\Delta_{a\mu} \leq \theta\).

For visual simplicity, the vertical origin is set independently in each panel at the current-basin minimum, \(V\left( x_{a}^{\star} \right) = 0\), and all destination minima are shown at the same level. Equality of the destination-minimum costs is a graphical convention and is not assumed by the theory. Both the structural barriers and the criterion \(\theta\) are measured relative to the current-basin minimum. Destination minima may therefore lie at different levels without requiring any change in \(\theta\). If an unshifted absolute-cost axis were used, the accessibility line would be located at \(V\left( x_{a}^{\star} \right) + \theta\).

Left: Low plasticity, corresponding to strong configurational constraint. Only \(\mu_{2}\) satisfies the accessibility criterion. Right: The matched system or state with plasticity doubled, \(P^{(2)} = 2P^{(1)}\), corresponding to configurational constraint reduced by one half. By Lemma 1, the landscape and all structural barriers are positively homogeneous of degree one in the coefficient vector,
\[\Delta_{a\mu} = \frac{{\widehat{\Delta}}_{a\mu}}{P}\]
Every positive finite structural barrier is therefore divided by two, while the locations and identities of the minima and saddle points, the ordering of the barriers, and the set of candidate alternatives remain unchanged, as established by Lemma 1 and described geometrically in Remark 1. The barrier arrows in the right panel consequently have one half of the height of the corresponding arrows in the left panel. With \(\theta\) held fixed, both \(\mu_{1}\) and \(\mu_{2}\) satisfy the criterion at high plasticity, so that
\[\mathcal{R}_{a}\text{\,}\left( P^{(1)},\theta \right) \subseteq \mathcal{R}_{a}\text{\,}\left( P^{(2)},\theta \right)\]
as stated in Corollary 1. Increasing plasticity does not create new alternatives. It uniformly lowers the positive finite barriers separating the current basin from alternatives already specified by \(\varphi\), \(\mathcal{B}\), and \(\widehat{\mathbf{J}}\), allowing some of them to cross the declared accessibility criterion. The figure is schematic. The barrier arrows and the accessibility criterion \(\theta\) are drawn to scale.
\end{minipage}
\end{figure}

\subsection{Worked scaling example}\label{worked-scaling-example}

Suppose a normalized architecture has three structural coefficients
\[\widehat{\mathbf{J}} = \left( 0.5,1.0,1.5 \right)\]
and a normalized barrier spectrum
\[\left( {\widehat{\Delta}}_{a\mu_{1}},{\widehat{\Delta}}_{a\mu_{2}},{\widehat{\Delta}}_{a\mu_{3}} \right) = \left( 1.0,0.4,1.8 \right)\]
A system with \(C = 2\) has \(P = 0.5\), coefficient vector \(\mathbf{J} = \left( 1,2,3 \right)\), and barriers \(\left( 2.0,0.8,3.6 \right)\). A second system on the same ray with \(C = 1\) has \(P = 1\), coefficient vector \(\mathbf{J} = \left( 0.5,1,1.5 \right)\), and barriers \(\left( 1.0,0.4,1.8 \right)\). Doubling plasticity halves every positive barrier without changing the alternatives or their ordering. If \(\theta = 1.2\), the accessible repertoire expands from \(\left\{ \mu_{2} \right\}\) to \(\left\{ \mu_{1},\mu_{2} \right\}\). The third alternative remains inaccessible. Figure 1 schematically illustrates the same uniform barrier rescaling and the resulting expansion of the accessible repertoire.

This example illustrates both results. The theorem rescales barrier heights without imposing an accessibility threshold. The corollary records which entries cross an independently chosen threshold.

\section{Network realization}\label{network-realization-1}

\subsection{Coupling magnitude as configurational constraint}\label{coupling-magnitude-as-configurational-constraint}

A network provides an important realization of the general theory \cite{r24,r25,r26}. Suppose the components in \(\mathcal{B}\) represent specified interactions among network elements and the coefficients \(J_{\alpha}\) represent their coupling strengths. In this realization, the structural components are indexed directly by the edges \(\left( k,l \right) \in \mathcal{E}_{a}\), where \(\mathcal{E}_{a}\) denotes the edge set and \(J_{kl}\) is the coefficient attached to edge \(\left( k,l \right)\). Thus, \(m = \mid \mathcal{E}_{a} \mid\), and the component-normalized definition becomes
\[C^{edge} = \frac{1}{m}\sum_{(k,l) \in \mathcal{E}_{a}}^{}\left| J_{kl} \right|,\ \ P^{edge} = \frac{m}{\sum_{(k,l) \in \mathcal{E}_{a}}^{}\left| J_{kl} \right|}\]
The network interpretation does not make connectivity the \emph{definiendum}. Connectivity is one coefficient-level realization of configurational constraint when the declared representation assigns that role to network interactions.

An exact application of Theorem 1 to an edge-based network measure requires every term that shapes the relevant local barriers to belong to the coefficient structure being rescaled. If intrinsic or external-field terms remain fixed while only edge weights change, coupling-based plasticity can still be studied, but the exact proportionality between barriers and \(1/P\) is no longer guaranteed.

The node-normalized expression introduced in the companion network formulation is
\[P^{node} = \frac{n}{\sum_{(k,l) \in \mathcal{E}_{a}}^{}\left| J_{kl} \right|}\]
where \(n\) is the number of network nodes \cite{r12}. The two normalizations are related by
\[P^{node} = \frac{n}{m}P^{edge}\]
They therefore produce the same ordering and the same radial scaling whenever the representation and network size are fixed, because \(n/m\) is then constant. They are not numerically identical across representations or network sizes. The numerator \(m\) in the present construction averages over represented structural contributions. The numerator \(n\) in the companion formulation encodes node-based configurational dimensionality. Cross-size comparisons depend on that model-specific choice and are not supplied by the exact theorem developed here.

\subsection{Effective and nominal aggregate coupling}\label{effective-and-nominal-aggregate-coupling}

Effective aggregate coupling is the aggregate coupling that constrains the represented system. Nominal aggregate coupling is the aggregate unsigned magnitude of the direct couplings among its elements. They can differ because unrepresented common causes can add induced coupling to the represented coefficients, whereas frustration can reduce how much of the unsigned magnitude of those coefficients is jointly realizable as configurational constraint. The general definition \(P = 1/C\) is unchanged. The two cases concern what the represented coefficients contain and which aggregate functional of them instantiates the configurational constraint \(C\).

\emph{Increase by common causes.} The coefficients entering \(C\) are those of the declared structural representation, and the exact results apply only to coefficients that enter \(V\) in that representation. Their origin need not be a direct pairwise causal interaction. In statistical network representations, dependence induced by an unrepresented common cause can appear as coupling among the represented elements. For continuous, normally distributed variables, a unidimensional latent-variable model corresponds to a fully connected Gaussian graphical model, so that common variance remains represented in the network coefficients \cite{r27}. For binary variables, Ising network models are likewise statistically equivalent to a class of latent-variable models from item response theory \cite{r28}. A common cause can therefore raise the unsigned magnitude of the represented coefficients above nominal aggregate coupling, even in the absence of direct interactions among the affected elements. For inference about direct causal interactions, such induced coupling reflects confounding by the common cause. For plasticity, within the declared representation, such induced coupling remains constraint: the induced contributions enter \(\mathbf{J}\) and therefore contribute to \(V\) and shape its barriers in the same way as other contributions. A given pattern of dependence can be compatible with direct interactions, a common cause, or both, so the dependence structure alone does not identify the causal mechanism \cite{r27,r28,r29}. In an estimated network, the coefficients already include any induced part, so nominal aggregate coupling is not observed separately. Constraint is therefore read from the coupling of the declared representation: plasticity is defined relative to the represented system, not to the causal mechanism by which its coupling arose.

\emph{Reduction by frustration.} The unsigned magnitude of the coefficients measures how much coupling is represented, but not necessarily how much can be expressed jointly as configurational constraint. In the absence of induced coefficients, this unsigned magnitude is the nominal aggregate coupling. It ignores whether signed couplings can be satisfied jointly. In a signed network, frustration occurs when no orientation \(\sigma_{k} \in \left\{ {-}1,{+}1 \right\}\) satisfies \(J_{kl}\sigma_{k}\sigma_{l} = \mid J_{kl} \mid\) for every edge \(\left( k,l \right) \in \mathcal{E}_{a}\) \cite{r30,r31}. In the symmetric signed-network realization considered here, effective aggregate coupling is given by the maximum jointly realizable coupling
\[E_{a}^{\star} = \max_{\mathbf{\sigma} \in \left\{ {-}1,{+}1 \right\}^{n}}{\sum_{(k,l) \in \mathcal{E}_{a}}^{}J_{kl}}\sigma_{k}\sigma_{l}\]
It is an aggregate functional of the coefficients, not a second set of edge weights. Finite comparisons using this functional require \(E_{a}^{\star} > 0\). For a structurally balanced signed network \cite{r32},
\[E_{a}^{\star} = \sum_{(k,l) \in \mathcal{E}_{a}}^{}\left| J_{kl} \right|\]
Effective aggregate coupling and the unsigned magnitude then coincide. In general, their ratio is the normalized net coupling
\[s_{a} = \frac{E_{a}^{\star}}{\sum_{(k,l) \in \mathcal{E}_{a}}^{}\left| J_{kl} \right|}\]
Replacing the unsigned magnitude by effective aggregate coupling \(E_{a}^{\star}\) in the denominator gives the frustration-adjusted quantities
\[P^{fr,edge} = \frac{m}{E_{a}^{\star}},\ \ P^{fr,node} = \frac{n}{E_{a}^{\star}}\]
Since
\[E_{a}^{\star} = s_{a}\sum_{\left( k,l \right) \in \mathcal{E}_{a}}^{} \mid J_{kl} \mid\]
the adjusted quantities can equivalently be written as
\[P^{fr,edge} = \frac{P^{edge}}{s_{a}},\ P^{fr,node} = \frac{P^{node}}{s_{a}}\]
Thus, where couplings conflict, effective aggregate coupling is smaller than the unsigned magnitude of the coefficients and the corresponding plasticity is larger by the factor \(1/s_{a}\). The inverse relation that defines plasticity is preserved. The frustration adjustment replaces the unsigned coefficient magnitude with a jointly realizable aggregate functional within a fixed representation. A change in the shape functions or in the representation itself is not an adjustment of this kind and defines a different comparison class (Section~6.1).

Because
\[E_{a}^{\star}\left( \lambda\mathbf{J} \right) = \lambda E_{a}^{\star}\left( \mathbf{J} \right)\quad\text{for }\lambda > 0\]
the positive-homogeneity argument remains exact along a structural ray normalized by this functional. Finding the maximizing orientation becomes computationally difficult for large general signed networks \cite{r33}. That difficulty concerns estimation rather than definition.

\emph{Domain and coding.} Network elements can be coded symmetrically, with values in \(\left\{ {-}1,{+}1 \right\}\), or non-negatively, with values in \(\left\{ 0,1 \right\}\), where zero denotes absence. A symmetric coding need not imply a mirror state because its negative level can also denote absence. A coherent transformation of couplings, fields, and the constant term makes the two codings descriptions of the same model, but it is not a rescaling of the coefficient vector alone because it transfers weight between couplings and field terms. Theorem 1 therefore applies exactly only when every relevant field term is included in the rescaled vector. The codings are thus instances of the same theory rather than competing formulations.

\subsection{Relation to psychological network models}\label{relation-to-psychological-network-models}

When an estimated psychological network is summarized by aggregate absolute edge weight, network-based plasticity is its reciprocal up to the declared normalization. The contribution is therefore theoretical rather than arithmetical. Aggregate coupling is interpreted as present configurational constraint, its reciprocal defines plasticity, and the resulting consequences for structural barriers are derived.

This interpretation does not imply that aggregate connectivity alone determines barrier landscapes across models. Their geometry also depends on model type, variable encoding, thresholds, and other parameters \cite{r13,r34,r35}. Theorem 1 states only that uniform rescaling of all coefficients in a fixed normalized architecture rescales the local structural barriers. Changes in encoding, thresholds, or coefficient architecture define different comparison classes.

In statistical Ising representations, the inverse temperature \(\beta\) multiplies the Hamiltonian and can be absorbed into the couplings and thresholds, so that varying \(\beta\) while holding the underlying couplings and thresholds fixed is equivalent to a uniform rescaling of these coefficients \cite{r36,r37}. Within that representation, temperature variation is therefore a special case of displacement along the structural ray of Theorem 1, which extends such comparisons to any declared representation and specifies the conditions under which they order barriers and accessible repertoires. This equivalence is representational and does not identify plasticity with thermodynamic temperature.

\subsection{Instantiating the declared representation in estimated networks}\label{instantiating-the-declared-representation-in-estimated-networks}

The definition becomes empirical when the declared representation is instantiated with data. In studies of mental states, the structural components in \(\mathcal{B}\) can represent specified interactions among a fixed set of variables, such as symptom items, affective states, or regional neural signals. Each function \(f_{\alpha}\) specifies the form of one such interaction, while the coefficients \(J_{\alpha}\) are the estimated interaction weights. Under the normalized unsigned \(\ell^{1}\) realization introduced in Section~4.2, \(C\) is their mean absolute value and \(P = 1/C\). Because these weights are estimated from observed dependence, they can include coupling induced by unrepresented common causes (Section~9.2). Noise that is independent across elements generates no systematic covariance among them. It can affect the precision and magnitude of estimated coefficients, but the covariance it produces by chance diminishes as observations accumulate. In time-series data, a structured common driver can induce coupling when it varies over the observation window, which must therefore be declared.

The estimate is prospective when the interaction weights are inferred from observations available before the change of interest. These observations may consist of intensive repeated measurements of a single system, such as experience-sampling or resting-state time series, or baseline measurements from a defined population \cite{r25,r38}. The unit of representation must therefore be stated explicitly. A person-specific network represents an individual, whereas a network estimated from pooled observations represents the defined population or subgroup. In the first case, \(a\) indexes the current basin of that individual. In the second, the landscape and its basins belong to the population-level representation and carry over to individuals only under additional assumptions. This network operationalization was proposed in previous theoretical work \cite{r10,r11}. Subsequent symptom-network studies have related baseline connectivity estimates to later changes in symptom severity, recovery time, and vulnerability to depressive symptomatology \cite{r39,r40,r41}.

Direct comparisons of \(P\) require the representation to remain fixed across the systems being compared, which in this setting means a common item set or parcellation, a common estimator, and a common normalization convention. To connect the resulting estimates to barrier spectra or accessible repertoires, the functional map \(\varphi\) must also be declared. Whether two symptom profiles with the same total score count as one functional alternative or as two determines the repertoire and the indexing of the barrier family. These choices are not supplied by the theory and affect the substantive interpretation of the estimate. The resulting estimates instantiate the definition but do not by themselves test Theorem 1. Empirical systems generally differ in normalized architecture as well as in constraint magnitude and therefore lie off the structural ray. An exact test requires matched \(\widehat{\mathbf{J}}\), with coefficient vectors differing only by a positive scale factor. Differences in normalized architecture must instead be treated as architectural perturbations or sources of approximation error. These estimates establish that \(C\) and \(P\) can be computed prospectively within a declared representation.

\section{Conceptual distinctions}\label{conceptual-distinctions}

\subsection{Plasticity is not realized change}\label{plasticity-is-not-realized-change}

Plasticity is a prospective property of the system, defined before an outcome is realized. A highly plastic system may remain in its current state, whereas a weakly plastic system may undergo a large change along one of only a few available alternatives.

\subsection{Plasticity is not a rate}\label{plasticity-is-not-a-rate}

Plasticity is not a temporal derivative such as
\[\frac{d\mathbf{J}}{dt}\]
A rate requires a trajectory. The present definition requires the current structural organization.

\subsection{Plasticity is not degeneracy}\label{plasticity-is-not-degeneracy}

Degeneracy is the multiplicity of distinct physical configurations that realize the same functional outcome. Plasticity concerns the constraint on access to functionally distinct outcomes. The two concepts are not identical, but degeneracy can affect configurational accessibility. Greater degeneracy may provide more paths to an alternative and thereby lower its minimax barrier. In the present construction, this effect enters \(\Delta_{a\mu}\) rather than the number of alternatives under \(\varphi\).

\subsection{Plasticity is not a transition probability}\label{plasticity-is-not-a-transition-probability}

The structural theorem assigns no probability to subsequent trajectories. Plasticity asks how strongly the present structure constrains access to available alternatives. Within the matched structural class, greater plasticity lowers structural barriers. When the dynamics is sensitive to these structural costs, this barrier lowering can increase transition propensities, but plasticity does not by itself determine which alternative will be realized or with what probability. Determining these probabilities additionally requires dynamics, noise, initial conditions, perturbations, and an observation horizon.

\subsection{Plasticity is not an entropy}\label{plasticity-is-not-an-entropy}

The accessible repertoire \(\mid \mathcal{R}_{a} \mid\) counts functionally distinct alternatives satisfying a declared structural criterion. It is not an entropy of a normalized outcome distribution. Entropy characterizes how probability or weight is distributed across possibilities. No entropy concept is required for the definition, theorem, or corollary.

\section{Scope and limitations}\label{scope-and-limitations}

\subsection{Locality}\label{locality}

The theory is local to the current configurational basin. Both \(P\) and the indexed barrier family
\[\left\{ \Delta_{a\mu} \right\}_{\mu \in \mathcal{Y}_{a}}\]
refer to departures from that basin. The theorem does not describe transitions among successive basins or multistep exploration of a global landscape.

\subsection{Structural costs rather than dynamical propensities}\label{structural-costs-rather-than-dynamical-propensities}

No transition rate, noise amplitude, observation horizon, friction coefficient, inertia, or first-passage process enters the theorem. Accordingly, \(\Delta_{a\mu}\) orders structural costs. Once a dynamical law relating those costs to transitions is specified, they contribute to transition propensities but do not determine them by themselves. Two alternatives with the same barrier can have different transition rates because kinetic prefactors, path geometry, and stochastic forcing may differ \cite{r17,r18,r26}.

\subsection{Representation-relative applicability}\label{representation-relative-applicability}

Normalization removes arbitrary componentwise rescaling but does not select a unique decomposition. Plasticity is therefore defined relative to the declared structural representation \(\mathcal{B}\):
\[P = P_{\mathcal{B}}\left( \mathbf{J} \right)\]
Direct comparisons therefore require a common \(\mathcal{B}\) and coefficient-identification procedure, fixed independently of the magnitudes used to calculate \(P\). Under these conditions, \(P\) is directly comparable across systems. Section~9.4 describes their implementation in estimated networks. Cross-representation comparisons require an explicit mapping.

\subsection{Matched functional resolution}\label{matched-functional-resolution}

The functional map \(\varphi\) fixes which distinctions count as different alternatives. A finer partition can distinguish outcomes that a coarser partition groups together. Accordingly, direct comparisons of barrier spectra and accessible repertoires require matched \(\varphi\), unless an explicit mapping between partitions is supplied. Once \(\varphi\) is fixed, barrier indexing and repertoire cardinality are defined consistently across the comparison class.

\subsection{Fixed normalized architecture}\label{fixed-normalized-architecture}

The exact theorem compares systems on the same local structural ray. Because the scalar \(P\) removes architectural information, systems with the same plasticity can have different barrier spectra when their normalized architectures differ. The theorem therefore isolates the effect of radial variation in constraint magnitude. It does not imply that an arbitrary increase in \(P\) must lower every barrier when \(\widehat{\mathbf{J}}\) also changes. The extent to which biological systems approximate the fixed-architecture condition is an empirical question discussed in Section~5.1.

\subsection{Measurement}\label{measurement}

The theory specifies a \emph{definiendum} and derives a structural consequence. It does not establish that a particular connectivity matrix, imaging signal, molecular marker, or behavioral assay directly measures \(\mathbf{J}\), \(C\), or \(P\).

Empirical estimation requires a measurement model linking observables to a declared representation and its coefficients. Reliability, estimation error, architectural mismatch, and the validity of the functional partition must then be evaluated. These requirements should be assessed against the theoretical property defined here rather than used to define plasticity retrospectively.

\section{Conclusion}\label{conclusion}

Plasticity is often treated as self-evident, yet most operational measures capture realized change rather than the capacity for change. I propose that the prospective property these measures seek to estimate is inverse configurational constraint. Relative to a declared structural representation,
\[P = \frac{1}{C}\]
Higher plasticity therefore means that the present configuration imposes weaker constraint on displacement toward functionally distinct alternatives. Plasticity is specified by current structure before any subsequent trajectory is observed.

Within the exact structural class defined here, positive homogeneity implies that higher plasticity uniformly lowers every positive finite local barrier. Given a common structural-accessibility criterion, the accessible repertoire can therefore expand but cannot contract. These are derived consequences within the specified structural class, not parts of the definition. They hold when the functional partition, structural representation, normalized architecture, and admissible-path rules are fixed.

In network representations, aggregate coupling provides a coefficient-level realization of configurational constraint, linking connectivity to a prospective capacity for change. Plasticity is not realized change, a rate of change, or a transition probability, although it can shape transition propensities through its effects on structural barriers. It captures how weakly the present structure constrains what the system could do otherwise.

A strength of this framework is that the definition is not tied to any particular system, level of analysis or approach. It applies wherever the relevant configurational constraints admit a declared structural representation, from molecular and cellular systems to neural, behavioral, and relational levels. By defining a common prospective property while preserving representation-specific differences, the framework provides a shared conceptual and formal language, enabling disciplines that study plasticity at different levels to interact without conflating their distinct mechanisms or scales.

\section*{Acknowledgments}
The conception, the concepts and the theoretical argument presented here are entirely those of the author. Artificial intelligence assistants (Claude by Anthropic and ChatGPT by OpenAI, September 2026) were used to assist with mathematical formalization and drafting. The author verified every derivation and takes full responsibility for the content. I thank Claudia Delli Colli and Jacopo Niedda for valuable discussion of the ideas presented in this paper.


\begin{thebibliography}{41}
\small
\bibitem{r1} Seblani, M., Brezun, J.M., Feron, F. \& Hoquet, T. 2024 Rethinking plasticity: Analysing the concept of "destructive plasticity" in the light of neuroscience definitions. \emph{The European journal of neuroscience} \textbf{60}, 4798-4812. (doi:10.1111/ejn.16487).
\bibitem{r2} West-Eberhard, M.J. 2003 \emph{Developmental Plasticity and Evolution}. New York, Oxford University Press.
\bibitem{r3} Hensch, T.K. 2005 Critical period plasticity in local cortical circuits. \emph{Nature reviews. Neuroscience} \textbf{6}, 877-888. (doi:10.1038/nrn1787).
\bibitem{r4} Citri, A. \& Malenka, R.C. 2008 Synaptic plasticity: multiple forms, functions, and mechanisms. \emph{Neuropsychopharmacology : official publication of the American College of Neuropsychopharmacology} \textbf{33}, 18-41. (doi:10.1038/sj.npp.1301559).
\bibitem{r5} Bliss, T.V. \& Lomo, T. 1973 Long-lasting potentiation of synaptic transmission in the dentate area of the anaesthetized rabbit following stimulation of the perforant path. \emph{The Journal of physiology} \textbf{232}, 331-356. (doi:10.1113/jphysiol.1973.sp010273).
\bibitem{r6} Sethna, J.P., Bierbaum, M.K., Dahmen, K.A., Goodrich, C.P., Greer, J.R., Hayden, L.X., Kent-Dobias, J.P., Lee, E.D., Liarte, D.B., Ni, X., et al. 2017 Deformation of Crystals: Connections with Statistical Physics. \emph{Annual Review of Materials Research} \textbf{47}, 217-246. (doi:10.1146/annurev-matsci-070115-032036).
\bibitem{r7} Holling, C.S. 1973 Resilience and Stability of Ecological Systems. \emph{Annual Review of Ecology, Evolution, and Systematics} \textbf{4}, 1-23. (doi:10.1146/annurev.es.04.110173.000245).
\bibitem{r8} Walker, B., Holling, C.S., Carpenter, S. \& Kinzig, A. 2004 Resilience, Adaptability and Transformability in Social-ecological Systems. \emph{Ecology and Society} \textbf{9}, 5. (doi:10.5751/ES-00650-090205).
\bibitem{r9} Teece, D.J., Pisano, G. \& Shuen, A. 1997 Dynamic capabilities and strategic management. \emph{Strategic Management Journal} \textbf{18}, 509-533. (doi:10.1002/(SICI)1097-0266(199708)18:7$<$509::AID-SMJ882$>$3.0.CO;2-Z).
\bibitem{r10} Branchi, I. 2022 Plasticity in mental health: A network theory. \emph{Neuroscience and biobehavioral reviews} \textbf{138}, 104691. (doi:10.1016/j.neubiorev.2022.104691).
\bibitem{r11} Branchi, I. 2023 A mathematical formula of plasticity: Measuring susceptibility to change in mental health and data science. \emph{Neuroscience and biobehavioral reviews} \textbf{152}, 105272. (doi:10.1016/j.neubiorev.2023.105272).
\bibitem{r12} Branchi, I. 2026 Quantifying plasticity: A network-based framework linking structure to dynamical regimes. \emph{Neuroscience and biobehavioral reviews} \textbf{187}, 106765. (doi:10.1016/j.neubiorev.2026.106765).
\bibitem{r13} Borsboom, D. 2017 A network theory of mental disorders. \emph{World psychiatry : official journal of the World Psychiatric Association} \textbf{16}, 5-13. (doi:10.1002/wps.20375).
\bibitem{r14} Epskamp, S., Borsboom, D. \& Fried, E.I. 2018 Estimating psychological networks and their accuracy: A tutorial paper. \emph{Behavior research methods} \textbf{50}, 195-212. (doi:10.3758/s13428-017-0862-1).
\bibitem{r15} Waddington, C.H. 1957 \emph{The Strategy of the Genes: A Discussion of Some Aspects of Theoretical Biology}. London, George Allen \& Unwin.
\bibitem{r16} Branchi, I. 2011 The double edged sword of neural plasticity: increasing serotonin levels leads to both greater vulnerability to depression and improved capacity to recover. \emph{Psychoneuroendocrinology} \textbf{36}, 339-351. (doi:10.1016/j.psyneuen.2010.08.011).
\bibitem{r17} Kramers, H.A. 1940 Brownian motion in a field of force and the diffusion model of chemical reactions. \emph{Physica} \textbf{7}, 284-304. (doi:10.1016/S0031-8914(40)90098-2).
\bibitem{r18} Hänggi, P., Talkner, P. \& Borkovec, M. 1990 Reaction-rate theory: fifty years after Kramers. \emph{Reviews of Modern Physics} \textbf{62}, 251-341. (doi:10.1103/RevModPhys.62.251).
\bibitem{r19} Wales, D.J. 2003 \emph{Energy Landscapes: Applications to Clusters, Biomolecules and Glasses}. Cambridge, Cambridge University Press.
\bibitem{r20} Onuchic, J.N., Luthey-Schulten, Z. \& Wolynes, P.G. 1997 Theory of protein folding: the energy landscape perspective. \emph{Annual review of physical chemistry} \textbf{48}, 545-600. (doi:10.1146/annurev.physchem.48.1.545).
\bibitem{r21} Edelman, G.M. \& Gally, J.A. 2001 Degeneracy and complexity in biological systems. \emph{Proceedings of the National Academy of Sciences of the United States of America} \textbf{98}, 13763-13768. (doi:10.1073/pnas.231499798).
\bibitem{r22} Tononi, G., Sporns, O. \& Edelman, G.M. 1999 Measures of degeneracy and redundancy in biological networks. \emph{Proceedings of the National Academy of Sciences of the United States of America} \textbf{96}, 3257-3262. (doi:10.1073/pnas.96.6.3257).
\bibitem{r23} Ezaki, T., Fonseca Dos Reis, E., Watanabe, T., Sakaki, M. \& Masuda, N. 2020 Closer to critical resting-state neural dynamics in individuals with higher fluid intelligence. \emph{Communications biology} \textbf{3}, 52. (doi:10.1038/s42003-020-0774-y).
\bibitem{r24} Watanabe, T., Hirose, S., Wada, H., Imai, Y., Machida, T., Shirouzu, I., Konishi, S., Miyashita, Y. \& Masuda, N. 2014 Energy landscapes of resting-state brain networks. \emph{Frontiers in neuroinformatics} \textbf{8}, 12. (doi:10.3389/fninf.2014.00012).
\bibitem{r25} Ezaki, T., Watanabe, T., Ohzeki, M. \& Masuda, N. 2017 Energy landscape analysis of neuroimaging data. \emph{Philosophical transactions. Series A, Mathematical, physical, and engineering sciences} \textbf{375}, 20160287. (doi:10.1098/rsta.2016.0287).
\bibitem{r26} Masuda, N., Islam, S., Aung, S.T. \& Watanabe, T. 2025 Energy landscape analysis based on the Ising model: Tutorial review. \emph{PLOS complex systems} \textbf{2}, e0000039. (doi:10.1371/journal.pcsy.0000039).
\bibitem{r27} Waldorp, L. \& Marsman, M. 2022 Relations between Networks, Regression, Partial Correlation, and the Latent Variable Model. \emph{Multivariate behavioral research} \textbf{57}, 994-1006. (doi:10.1080/00273171.2021.1938959).
\bibitem{r28} Marsman, M., Borsboom, D., Kruis, J., Epskamp, S., van Bork, R., Waldorp, L.J., van der Maas, H.L.J. \& Maris, G. 2018 An Introduction to Network Psychometrics: Relating Ising Network Models to Item Response Theory Models. \emph{Multivariate behavioral research} \textbf{53}, 15-35. (doi:10.1080/00273171.2017.1379379).
\bibitem{r29} van der Maas, H.L., Dolan, C.V., Grasman, R.P., Wicherts, J.M., Huizenga, H.M. \& Raijmakers, M.E. 2006 A dynamical model of general intelligence: the positive manifold of intelligence by mutualism. \emph{Psychological review} \textbf{113}, 842-861. (doi:10.1037/0033-295X.113.4.842).
\bibitem{r30} Fischer, K.H. \& Hertz, J.A. 1991 \emph{Spin Glasses}. Cambridge, Cambridge University Press.
\bibitem{r31} Toulouse, G. 1977 Theory of the frustration effect in spin glasses: I. \emph{Communications on Physics}, 115-119.
\bibitem{r32} Cartwright, D. \& Harary, F. 1956 Structural balance: a generalization of Heider's theory. \emph{Psychological review} \textbf{63}, 277-293. (doi:10.1037/h0046049).
\bibitem{r33} Barahona, F. 1982 On the computational complexity of Ising spin glass models. \emph{Journal of Physics A: Mathematical and General} \textbf{15}, 3241-3253.
\bibitem{r34} Hoekstra, R.H.A., de Ron, J., Epskamp, S., Robinaugh, D.J. \& Borsboom, D. 2026 Mapping the dynamics of idiographic network models to the network theory of psychopathology. \emph{Behavior research methods} \textbf{58}, 130. (doi:10.3758/s13428-026-03009-w).
\bibitem{r35} Cramer, A.O., van Borkulo, C.D., Giltay, E.J., van der Maas, H.L.J., Kendler, K.S., Scheffer, M. \& Borsboom, D. 2016 Major Depression as a Complex Dynamic System. \emph{PloS one} \textbf{11}, e0167490. (doi:10.1371/journal.pone.0167490).
\bibitem{r36} Dalege, J., Borsboom, D., van Harreveld, F. \& van der Maas, H.L.J. 2018 The Attitudinal Entropy (AE) Framework as a General Theory of Individual Attitudes. \emph{Psychological Inquiry} \textbf{29}, 175-193. (doi:10.1080/1047840X.2018.1537246).
\bibitem{r37} Grimes, P.Z., Murray, A.L., Smith, K., Allegrini, A.G., Piazza, G.G., Larsson, H., Epskamp, S., Whalley, H.C. \& Kwong, A.S.F. 2025 Network temperature as a metric of stability in depression symptoms across adolescence. \emph{Nature Mental Health} \textbf{3}, 548-557. (doi:10.1038/s44220-025-00415-5).
\bibitem{r38} Myin-Germeys, I., Kasanova, Z., Vaessen, T., Vachon, H., Kirtley, O., Viechtbauer, W. \& Reininghaus, U. 2018 Experience sampling methodology in mental health research: new insights and technical developments. \emph{World psychiatry : official journal of the World Psychiatric Association} \textbf{17}, 123-132. (doi:10.1002/wps.20513).
\bibitem{r39} Delli Colli, C., Chiarotti, F., Campolongo, P., Giuliani, A. \& Branchi, I. 2024 Towards a network-based operationalization of plasticity for predicting the transition from depression to mental health. \emph{Nature Mental Health} \textbf{2}, 200-208. (doi:10.1038/s44220-023-00192-z).
\bibitem{r40} Delli Colli, C., Viglione, A., Giuliani, A. \& Branchi, I. 2025 Anticipating depression trajectories by measuring plasticity and change through symptom network dynamics. \emph{European psychiatry : the journal of the Association of European Psychiatrists} \textbf{68}, e128. (doi:10.1192/j.eurpsy.2025.10083).
\bibitem{r41} Delli Colli, C., Viglione, A., Poggini, S. \& Branchi, I. 2026 Network-based assessment of plasticity predicts vulnerability to depressive symptomatology in healthy individuals. \emph{Journal of affective disorders} \textbf{412}, 122129. (doi:10.1016/j.jad.2026.122129).
\end{thebibliography}
\end{document}